\theoremstyle{plain}
\theoremstyle{plain}
\theoremstyle{plain}
\theoremstyle{plain}
\newtheorem{thm}{\protect\theoremname}
\theoremstyle{plain}
\theoremstyle{definition}
\theoremstyle{definition}
\theoremstyle{definition}
\providecommand{\claimname}{Claim}
\providecommand{\lemmaname}{Lemma}
\providecommand{\propositionname}{Proposition}
\providecommand{\theoremname}{Theorem}
\providecommand{\corollaryname}{Corollary} 
\providecommand{\definitionname}{Definition}
\providecommand{\assumptionname}{Assumption}
\providecommand{\remarkname}{Remark}
\newcommand{\Shat}{\widehat{S}}
\newcommand{\kmax}{k_{\mathrm{max}}}
\newcommand{\ncheck}{\check{n}}
\newcommand{\ntil}{\widetilde{n}}
\newcommand{\Bernoulli}{\mathrm{Bernoulli}}
\newcommand{\pe}{P_{\mathrm{e}}}
\newcommand{\Yv}{\mathbf{Y}}
\newcommand{\Ac}{\mathcal{A}}
\newcommand{\Bc}{\mathcal{B}}
\newcommand{\Cc}{\mathcal{C}}
\newcommand{\EE}{\mathbb{E}}
\newcommand{\PP}{\mathbb{P}}
\providecommand{\algorithmname}{Algorithm}
\newcommand{\manuallabel}[2]{\def\@currentlabel{#2}\label{#1}}
\begin{document}

\title{An Efficient Algorithm for Capacity-Approaching \\ Noisy Adaptive Group Testing}

\author{
 \IEEEauthorblockN{Jonathan Scarlett}
  \IEEEauthorblockA{Deptartment of Computer Science \& Department of Mathematics \\
    National University of Singapore \\
    Email: scarlett@comp.nus.edu.sg }
} 

\maketitle

\begin{abstract}
    In this paper, we consider the group testing problem with adaptive test designs and noisy outcomes.  We propose a computationally efficient four-stage procedure with components including random binning, identification of bins containing defective items, $1$-sparse recovery via channel codes, and a ``clean-up'' step to correct any errors from the earlier stages.  We prove that the asymptotic required number of tests comes very close to the best known information-theoretic achievability bound (which is based on computationally intractable decoding), and approaches a capacity-based converse bound in the low-sparsity regime.
\end{abstract}

\long\def\symbolfootnote[#1]#2{\begingroup\def\thefootnote{\fnsymbol{footnote}}\footnote[#1]{#2}\endgroup}

\vspace*{-0.5ex}
\section{Introduction}

The group testing problem consists of determining a small subset $S$ of ``defective'' items within a larger set of items $\{1,\dotsc,p\}$, based on a number of possibly-noisy tests. This problem has a history in medical testing \cite{Dor43}, and has regained significant attention following new applications in areas such as communication protocols \cite{Ant11}, pattern matching \cite{Cli10}, and database systems \cite{Cor05}, and connections with compressive sensing \cite{Gil08,Gil07}.  Under a widely-adopted symmetric noise model, each test takes the form
\begin{equation}
    Y = \bigvee_{j \in S} X_j \oplus Z, \label{eq:gt_symm_model}
\end{equation}
where the test vector $X = (X_1,\dotsc,X_p) \in \{0,1\}^p$ indicates which items are included in the test, $Y$ is the resulting observation, $Z \sim \Bernoulli(\rho)$ for some $\rho \in \big(0,\frac{1}{2}\big)$, and $\oplus$ denotes modulo-2 addition.  The goal is to design a sequence of tests whose outcomes can be used to reliably recover $S$.

In the {\em adaptive} setting, a given test $X^{(i)}$ can be designed based on the previous outcomes $Y^{(1)},\dotsc,Y^{(i-1)}$.  While near-optimal adaptive designs have long been known in the noiseless setting \cite{Hwa72}, relatively less is known in the noisy setting, which is the focus of the present paper.  

In a recent work \cite{Sca18}, we developed both information-theoretic limits and performance bounds for practical algorithms, but with the gap between the two remaining significant.  In this paper, we substantially narrow this gap by providing a computationally efficient algorithm with a performance guarantee that nearly matches the best known information-theoretic achievability bound from \cite{Sca18}, which in turn nearly matches a converse bound in several regimes of interest.

\begin{figure*}
    \begin{centering}
        \includegraphics[width=0.88\columnwidth]{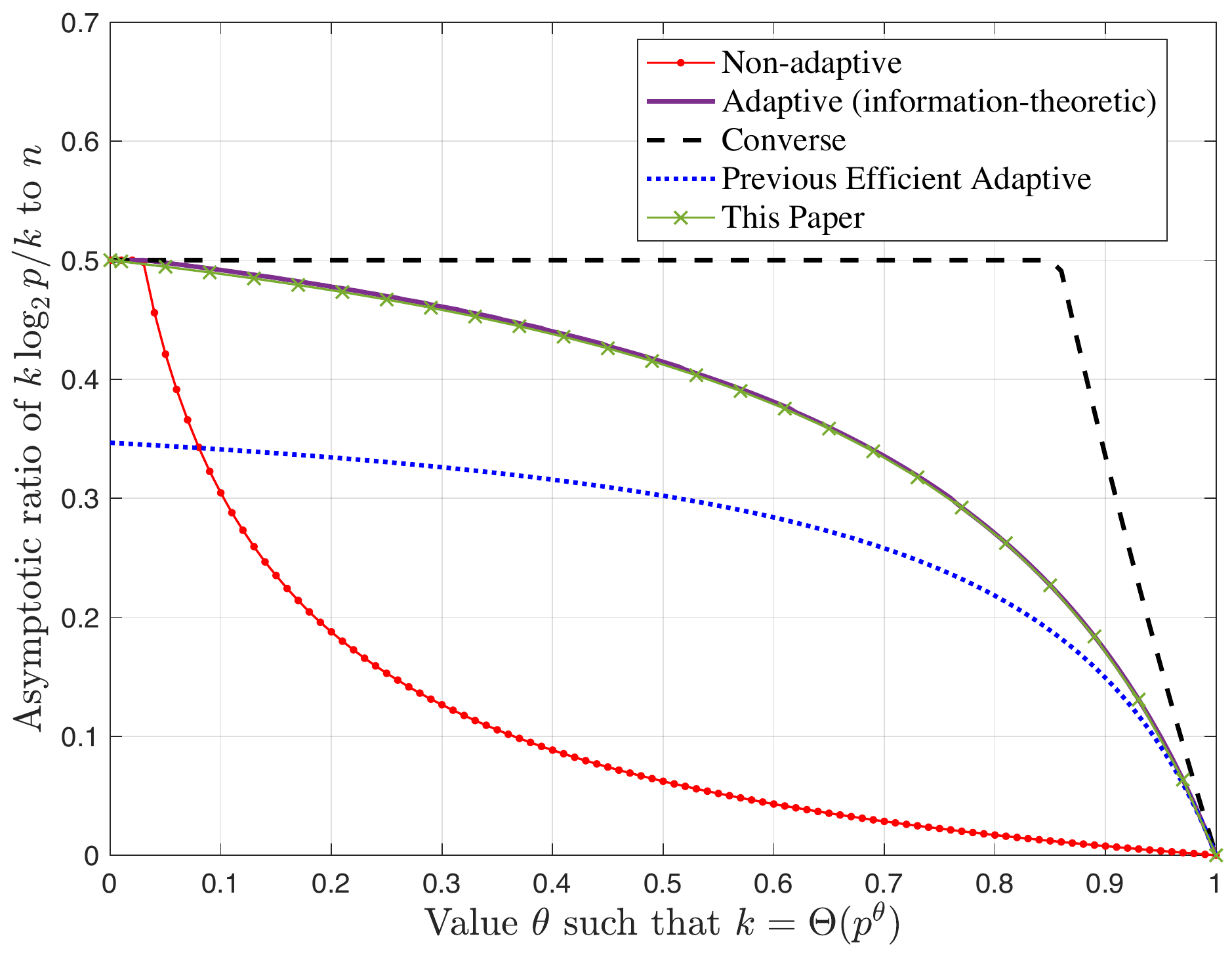} \qquad
        \includegraphics[width=0.88\columnwidth]{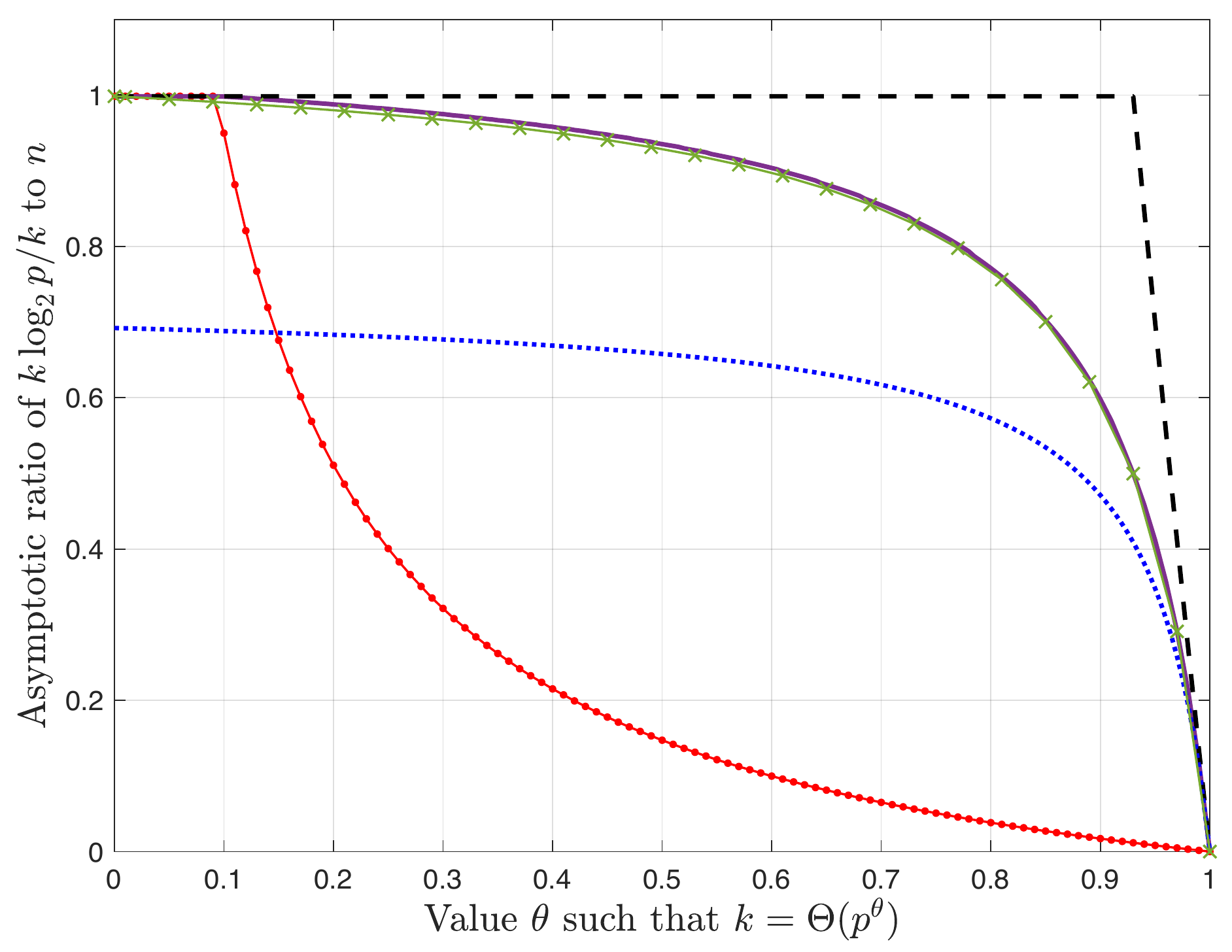}
        \par
    \end{centering}
    
    \caption{Asymptotic performance bounds for noisy group testing with noise level $\rho = 0.11$ (Left) and $\rho = 10^{-4}$ (Right). \label{fig:rho11}}
    \vspace*{-2ex}
\end{figure*}

\subsection{Problem Setup} \label{sec:setup}

We let the defective set $S$ be uniform on the ${p \choose k}$ subsets of $\{1,\dotsc,p\}$ of cardinality $k$.  An adaptive algorithm iteratively designs a sequence of tests $X^{(1)},\dotsc,X^{(n)}$, with $X^{(i)} \in \{0,1\}^p$.  The corresponding outcomes are denoted by $\Yv = (Y^{(1)},\dotsc,Y^{(n)})$ with $Y^{(i)} \in \{0,1\}$, and these follow the model \eqref{eq:gt_symm_model} with independence between samples.  Since we are in the adaptive setting, a given test is allowed to depend on all of the previous outcomes.

Given the tests and their outcomes, a \emph{decoder} forms an estimate $\Shat$ of $S$.  We consider the exact recovery criterion, in which the error probability is given by 
\begin{equation}
\pe := \PP[\Shat \ne S], \label{eq:pe}
\end{equation}
where the probability is with respect to the randomness of the defective set $S$, the tests $X^{(1)},\dotsc,X^{(n)}$ (if randomized), and the noisy outcomes $Y^{(1)},\dotsc,Y^{(n)}$.  We focus on the goal of vanishing error probability, i.e., $\pe \to 0$ as $n \to \infty$, in the sub-linear sparsity regime $k = \Theta(p^{\theta})$ with $\theta \in (0,1)$.

\subsection{Related work}

\noindent {\bf Information-theoretic limits.} The information-theoretic limits of group testing were first studied in the Russian literature \cite{Mal78,Mal80,Dya81}, and have recently become increasingly well-understood \cite{Ati12,Ald14,Sca15,Sca15b,Sca16b,Ald15}.  Among the existing works, the results most relevant to the present paper are as follows:
\begin{itemize}
    \item In the adaptive setting, it was shown by Baldassini {\em et al.} \cite{Bal13} that if the output $Y$ is produced by passing the noiseless outcome $U = \vee_{j \in S} X_j$ through a binary channel $P_{Y|U}$, then the number of tests for attaining $\pe \to 0$ must satisfy $n \ge \big(\frac{1}{C}k\log\frac{p}{k}\big)(1-o(1))$,\footnote{Here and subsequently, the function $\log(\cdot)$ has base $e$, and information measures have units of nats.} where $C$ is the Shannon capacity of $P_{Y|U}$.  For the symmetric noise model \eqref{eq:gt_symm_model}, this yields
    \begin{equation}
    n \ge \frac{k\log\frac{p}{k}}{\log 2 - H_2(\rho)} (1-o(1)), \label{eq:mi_conv}
    \end{equation}
    where $H_2(\rho) = \rho\log\frac{1}{\rho} + (1-\rho)\log\frac{1}{1-\rho}$.
    \item In the non-adaptive setting with symmetric noise, it was shown that an information-theoretic threshold decoder attains the bound \eqref{eq:mi_conv} for $k = o(p)$ under a partial recovery criterion \cite{Sca15b,Sca15}.  For exact recovery, a more complicated bound was also given in \cite{Sca15b} that matches \eqref{eq:mi_conv} when $k = \Theta(p^{\theta})$ for {\em sufficiently small} $\theta > 0$.
    \item In \cite{Sca18}, we provided information-theoretic achievability and converse bounds for the {\em noisy adaptive} setting that are often near-matching.  The achievability results are based on first achieving approximate recovery, and then using one or two extra stages of adaptivity to resolve the remaining errors in the estimate.
\end{itemize}

\noindent {\bf Non-adaptive algorithms.} Several non-adaptive noisy group testing algorithms have been shown to come with rigorous guarantees.  One of the building blocks of our algorithm is the {\em Noisy Combinatorial Orthogonal Matching Pursuit} (NCOMP) algorithm.  For each item, NCOMP checks the proportion of tests it was included in that returned positive, and declares the item to be defective if this number exceeds a suitably-chosen threshold.  This is known to provide optimal scaling laws for the regime $k = \Theta(p^{\theta})$ ($\theta \in (0,1)$) \cite{Cha11,Cha14}, albeit with somewhat suboptimal constants.  Improved constants have been provided via a technique known as {\em separate decoding of items} or {\em separate testing of inputs} \cite{Mal80,Sca17b}, as well as a noisy version of the Definite Defective (DD) algorithm \cite{Sca18b}.  The former is within a factor $\log 2$ of the information-theoretic limit as $\theta \to 0$, whereas the latter can provide better rates for higher values of $\theta$.

\noindent {\bf Adaptive algorithms.} As mentioned above, adaptive algorithms are relatively well-understood in the noiseless setting \cite{Hwa72,Dam12}.  To our knowledge, the first algorithm that was proved to achieve the optimal threshold $n = \big(k\log_2\frac{p}{k}\big)(1+o(1))$ for all $k = o(p)$ is Hwang's generalized binary splitting algorithm \cite{Hwa72,Du93}.  Various algorithms using limited rounds of adaptivity have also been proposed \cite{Mac98,Mez11,Dam12,Epp07}.

There are limited works on {\em noisy} adaptive algorithms.  In \cite{Cai13}, an algorithm called GROTESQUE was shown to provide optimal scaling laws in terms of samples and runtime, but no attempt was made to optimize the constant factors.  The above-mentioned work \cite{Sca18} also provided weaker bounds for a computationally efficient variant of the multi-stage algorithm.

\subsection{Contributions} \label{sec:contributions}

We provide a computationally efficient\footnote{See Section \ref{sec:runtime} for details on the decoding time.}  four-stage adaptive group testing algorithm which, as we will see in Section \ref{sec:result}, comes very close to matching the best known information-theoretic achievability bound from \cite{Sca18}.  Our algorithm makes use of various existing techniques (while combining and analyzing them in a novel manner), including the following:

    (i) We adopt the high-level approach of \cite{Sca18} of first achieving approximate recovery (i.e., finding a set $\hat{S}_1$ whose number of errors with respect to $S$ is an arbitrarily small fraction of $k$) and then performing two adaptive rounds to refine this initial estimate and achieve exact recovery.
    
    (ii) While \cite{Sca18} used non-adaptive methods to achieve approximate recovery, we use a two-stage adaptive method based on randomly binning the items, identifying which bins contain defectives, and then applying $1$-sparse recovery within those bins via a standard channel code.  To our knowledge, this idea was first proposed for a two-stage version of the GROTESQUE algorithm \cite{Cai13}.  However, \cite{Cai13} sought to achieve exact recovery from these two steps alone, resulting in a relatively large number of bins and requiring a channel code whose error probability decays exponentially in the code length.  Under the milder requirement of approximate recovery, we can use a much smaller number of bins, and adopt {\em any} capacity-achieving channel code.

\vspace*{-0.75ex}
\section{Main Result} \label{sec:result}
\vspace*{-0.25ex}

The main result of this paper is the following.  Here and subsequently, all asymptotic notation (e.g., $o(1)$) is as $p \to \infty$ (with $k \to \infty$ and $n \to \infty$ simultaneously).

\begin{thm} \label{thm:main}
    There exists a computationally efficient noisy four-stage group testing algorithm achieving exact recovery with $\pe \to 0$, using a number of tests satisfying the following when $k = \Theta(p^{\theta})$ for some $\theta \in (0,1)$:
        \begin{equation}
            n = \bigg( \frac{k \log \frac{p}{k}}{ \log 2 - H_2(\rho) } + \frac{k \log k}{D(\rho \| 1-\rho)}\bigg) (1+o(1)). \label{eq:n_final}
        \end{equation}
\end{thm}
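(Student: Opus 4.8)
The plan is to realize the two summands in \eqref{eq:n_final} through physically separate phases: the capacity term $\frac{k\log\frac{p}{k}}{\log 2 - H_2(\rho)}$ from an approximate-recovery phase, and the term $\frac{k\log k}{D(\rho\|1-\rho)}$ from a clean-up phase that upgrades approximate recovery to exact recovery. For approximate recovery I would randomly assign the $p$ items to $B = k/\gamma$ bins with $\gamma = \gamma_p \to 0$ slowly (e.g. $\gamma = 1/\log\log k$, so that $1/\gamma = o(\log k)$). With $k$ defectives in $B$ bins, a balls-in-bins computation shows all but an $O(\gamma)$ fraction of the occupied bins contain exactly one defective, so treating multi-occupied bins as errors already meets the approximate-recovery requirement of at most a vanishing fraction of $k$ mistakes. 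Stage~1 identifies the (at most $k$) occupied bins by pooling each bin a constant number of times and majority-voting; since only a small constant per-bin error is needed here, this costs $O(B) = O(k/\gamma) = o(k\log k)$ tests and is absorbed into the $(1+o(1))$.

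Stage~2 performs $1$-sparse recovery inside each occupied bin. Writing the per-item indicator column as a codeword, the observation $Y = X_{j^\star}\oplus Z$ is exactly the output of a $\Bernoulli(\rho)$-flip channel applied to the codeword of the lone defective $j^\star$, so identifying $j^\star$ among the $\approx \gamma p/k$ bin items is a channel-coding problem with capacity $\log 2 - H_2(\rho)$. Feeding a capacity-achieving code operated at a rate approaching capacity, each bin consumes $\frac{\log(\gamma p/k)}{\log 2 - H_2(\rho)}(1+o(1))$ tests; since $\log\gamma = o(\log\frac{p}{k})$, the $\le k$ bins together consume $\frac{k\log\frac{p}{k}}{\log 2 - H_2(\rho)}(1+o(1))$ tests, which is the first term. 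Because we demand only a vanishing (rather than $1/\mathrm{poly}(k)$) per-bin error, the rate can be pushed arbitrarily close to capacity, keeping this term sharp; the output is a set $\hat{S}_1$ with $o(k)$ false positives and $o(k)$ false negatives.

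The two adaptive refinement stages correct these $o(k)$ residual errors. False negatives (missed defectives) are localized by re-binning the not-yet-explained items into $O(\epsilon k)$ fresh bins, repeating the Stage~2 channel-coding step, and individually confirming the handful of new candidates; since the residual defective count is $\epsilon k = o(k)$, this costs at most $\epsilon \cdot \frac{k\log(p/k)}{\log 2 - H_2(\rho)} + \epsilon \cdot \frac{k\log k}{D(\rho\|1-\rho)}$, which is lower order, and a bounded number of iterations drives the miss count to zero. False positives are purged by testing each surviving candidate in $\hat{S}_1$ individually $n' = \frac{\log k}{D(\rho\|1-\rho)}(1+o(1))$ times and thresholding just above the non-defective mean $\rho n'$ (equivalently, using $|S|=k$, keeping the $k$ largest positive-counts). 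The decisive point is the threshold location: a genuine defective produces $\Bernoulli(1-\rho)$ outcomes and is wrongly dropped with probability $\approx e^{-n' D(\rho\|1-\rho)}$, so the union bound over the $\approx k$ defectives is $o(1)$ exactly when $n' \ge \frac{\log k}{D(\rho\|1-\rho)}(1+o(1))$, giving a total of $k n' = \frac{k\log k}{D(\rho\|1-\rho)}(1+o(1))$ — the second term.

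The main obstacle is matching the constant $D(\rho\|1-\rho)$ rather than the strictly smaller Chernoff information that a symmetric two-sided individual test would produce. Thresholding at the non-defective mean forces the stringent $o(1/k)$ reliability to apply in only one direction (never discarding a true defective), a Stein-type one-sided exponent equal to $D(\rho\|1-\rho)$; the opposite direction — failing to reject a false positive — may be granted a much looser guarantee, but this is admissible only when the number of false positives entering the final pass is sub-polynomial, i.e. $k^{o(1)}$. Since Stage~2 at near-capacity rate guarantees only $o(k)$ (not $k^{o(1)}$) false positives, the crux is an intermediate reduction that shrinks the false-positive pool to $k^{o(1)}$ at strictly lower-order cost, exploiting that the abundant true defectives need be tested essentially once while the rare residual errors can be re-examined cheaply; one must also verify that every auxiliary schedule (the choice $\gamma\to 0$, the collision fraction, the re-binning recursion, and the rate back-off) contributes only to the $(1+o(1))$. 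A final union bound over the failure events of all four stages then yields $\pe \to 0$.
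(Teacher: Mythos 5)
You have correctly reproduced the paper's approximate-recovery phase in spirit (random binning, identification of occupied bins, per-bin $1$-sparse recovery via an arbitrary capacity-achieving code, and an expectation/Markov argument yielding $o(k)$ mistakes), but your clean-up phase contains a genuine gap that you yourself flag as ``the crux'' and never close. With $n' = \frac{\log k}{D(\rho\|1-\rho)}(1+o(1))$ repetitions and a threshold at $(\rho+\varepsilon)n'$, a false positive in $\Shat_1$ survives with probability about $e^{-n'D(\rho+\varepsilon\,\|\,\rho)} = k^{-D(\rho+\varepsilon\|\rho)/D(\rho\|1-\rho)}$, and this exponent tends to $0$ as $\varepsilon \to 0$; raising the threshold to kill false positives instead degrades the defective-retention exponent strictly below $D(\rho\|1-\rho)$. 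Hence with $\Theta(\alpha k)$ false positives entering, the expected number of survivors is $\omega(1)$ and exact recovery fails. The ``keep the $k$ largest counts'' fallback is not available either, because $\Shat_1$ contains only $k - \Theta(\alpha k)$ true defectives (your false-negative repair runs on the complement and cannot tell you which members of $\Shat_1$ to evict), so any top-$k$ rule necessarily retains non-defectives. The paper closes exactly this hole by invoking \cite[Thm.~3]{Sca18}, i.e.\ Steps 2b--3 of Algorithm 1 and the bound \eqref{eq:n_init}: the Stein-exponent pass with $\ncheck \approx \frac{\log k}{D(\rho\|1-\rho)}$ tests per item is used \emph{not} to reject false positives but to retain $k-\alpha_2 k$ items that are, with high probability, all genuinely defective; the remaining $\alpha_2 k$ suspects are each retested $\ntil = \Theta(\log k)$ times with a \emph{symmetric} majority threshold $\frac{\ntil}{2}$, whose two-sided per-item error $k^{-\Theta(1)}$ admits a union bound over the $o(k)$-sized pool, at cost $O(\alpha_2 k\log k)$, absorbed into the $(1+o(1))$ since $\alpha_2$ is arbitrarily small. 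So no reduction of the false-positive pool to $k^{o(1)}$ is needed: the missing idea is to confine the expensive symmetric test to a \emph{sublinear} pool, not to shrink the pool to sub-polynomial size.

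Two further quantitative slips deserve mention. First, identifying occupied bins with a \emph{constant} number of repetitions per bin fails for your choice $B = k/\gamma$ with $1/\gamma \to \infty$: the roughly $B$ unoccupied bins each pass with some constant probability $\delta$, leaving $\Theta(\delta k/\gamma) = \omega(k)$ false-positive bins, which violates the requirement \eqref{eq:step1_req} and, worse, each such bin then consumes a full code block of $\Theta\big(\log\frac{p}{k}\big)$ tests in your Stage 2, inflating the total beyond the capacity term. You would need $\omega(1)$ repetitions per bin (e.g.\ $\Theta(\log\frac{1}{\gamma})$, still $o(k\log k)$ in total), or the paper's route of taking $B = k^{1+\epsilon}$ and running an approximate-recovery NCOMP on the super-items with $O\big(k\log\frac{B}{k}\big) = O(\epsilon k \log k)$ tests. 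Second, your claim that ``a bounded number of iterations drives the miss count to zero'' is inconsistent with your own insistence that any capacity-achieving code with merely vanishing error suffices: with per-bin decoding error $\delta_p = o(1)$, $r$ rounds leave about $k\delta_p^{\,r}$ misses, and bounded $r$ forces $\delta_p = k^{-\Omega(1)}$, i.e.\ a positive error exponent at the operating rate. The paper needs no such iteration: false negatives are recovered \emph{exactly} in a single non-adaptive NCOMP pass over $\{1,\dotsc,p\}\setminus\Shat_1$ (Step 2a of Algorithm 1) using $O(\alpha_1 k \log p)$ tests, which is negligible as $\alpha_1 \to 0$. In summary, your first term is obtained essentially as in the paper modulo these parameter fixes, but your clean-up phase, as proposed, does not establish the theorem.
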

\begin{proof}
    See Section \ref{sec:algo}.
\end{proof}

We note that this bound on the number of tests is in fact implicitly given in \cite{Sca18}, but only for a three-stage algorithm whose first stage uses a computationally intractable decoder (i.e., a brute force search over $p \choose k$ subsets) for non-adaptive designs.  Since $k \log k = \frac{\theta}{1-\theta}\big( k \log \frac{p}{k} \big)$ when $k = \Theta(p^{\theta})$, the second term in \eqref{eq:n_final} is negligible compared to the first in the limit $\theta \to 0$, so we asymptotically match the converse \eqref{eq:mi_conv}.

In Figure \ref{fig:rho11}, we compare Theorem \ref{thm:main} to other noisy adaptive group testing bounds from \cite{Sca18}, as well as a bound for the non-adaptive case from \cite{Sca15b}.  Among these, only the curve labeled ``Previous Efficient Adaptive'' was shown to be achievable with a computationally efficient algorithm, and this curve falls significantly short of that of Theorem \ref{thm:main}.  The curve labeled ``Adaptive (information-theoretic)'', corresponding to the above-mentioned three-stage algorithm, provides a very small (almost imperceptible) gain over that of Theorem \ref{thm:main}.  However, perhaps the most notable feature of this curve is that it achieves the capacity bound for {\em all sufficiently small $\theta$}, whereas Theorem \ref{thm:main} does so only in the limit as $\theta \to 0$ (despite coming very close at small $\theta$).

\section{Algorithm and its Analysis} \label{sec:algo}

As an important building block in our algorithm, we make use of the generic multi-stage procedure from \cite{Sca18}, described in Algorithm \ref{alg:steps_ref}.  This should be treated as an informal description, with the details given throughout the analysis.

\begin{algorithm}
    \caption*{ \manuallabel{pr:procedure}{1} \textbf{Algorithm 1:} $\mathbf{MultiStage}(\Ac_1,\Ac_2,p,k)$. \label{alg:steps_ref} }
    
    \smallskip
    \underline{Inputs}: Number of items $p$ and defectives $k$, adaptive or non-adaptive group testing algorithms $\Ac_1$ and $\Ac_2$.
    \smallskip
    
    \underline{Steps}:
    \begin{enumerate}
        \item[1.] Apply algorithm $\Ac_1$ to the ground set $\{1,\dotsc,p\}$ to find an estimate $\Shat_1$  of $S$ such that 
        \begin{equation}
            \max\{|\Shat_1 \backslash S|, |S \backslash \Shat_1|\} \le \alpha_1 k \label{eq:step1_req}
        \end{equation}
        with high probability, for some small $\alpha_1 > 0$.
        \item[2a.] Apply algorithm $\Ac_2$ to the reduced ground set $\{1,\dotsc,p\} \backslash \Shat_1$ to exactly identify the false negatives from the first step.  Let these items be denoted by $\Shat'_{2a}$.
        \item[2b.]  Test each item in $\Shat_1$ individually $\ncheck$ times (for suitably chosen $\ncheck$), and let $\Shat'_{2b} \subseteq \Shat_1$ contain the $k - \alpha_2 k$ items that returned positive the highest number of times, for some small $\alpha_2> 0$.
        \item[3.]  Test the items in $\Shat_1 \setminus \Shat'_{2b}$ (of which there are $\alpha_2 k$) individually $\ntil$ times (for suitably chosen $\ntil$), and let $\Shat'_3$ contain the items that returned positive at least $\frac{\ntil}{2}$ times.  The final estimate of the defective set is given by $\Shat := \Shat'_{2a} \cup \Shat'_{2b} \cup \Shat'_3$.
    \end{enumerate}
\end{algorithm}

\begin{algorithm}
    \caption*{ \manuallabel{pr:procedure}{2} \textbf{Algorithm 2:} $\mathbf{InnerAdaptive}(\Ac_3,\Cc,B,p)$. \label{alg:inner} }
    
    \smallskip
    \underline{Inputs}: Number of items $p$, number of bins $B$, channel code $\Cc$ containing $\frac{p}{B}$ codewords, adaptive or non-adaptive group testing algorithm $\Ac_3$. 
    \smallskip
    
    \underline{Steps}:
    
    \begin{enumerate}
        \item[1.] Partition the items $\{1,\dotsc,p\}$ into $B$ bins of size $\frac{p}{B}$ uniformly at random.
        \item[2.] Run the group testing algorithm $\Ac_3$ on the $B$ ``super-items'' formed in Step 1, where including super-item $j$ in a test is done by including every item in bin $j$.  
        \item[3.] For each bin indexed by $j \in \big\{ 1, \dotsc, B \big\}$ that returned positive, do the following:
        \begin{itemize}
            \item Apply non-adaptive group testing with $p' = \frac{p}{B}$ items, with a group testing matrix of size $n' \times p'$ constructed by arranging the $p'$ codewords of $\Cc$ in columns (where $n'$ is the code length).
            \item Use the test outcomes and a decoder for the channel code $\Cc$ to identify the single defective item in the bin.
        \end{itemize}
        \item[4.] Output $\hat{S}_{\rm inner}$ equaling the union of all single defective items identified in Step 3.
    \end{enumerate}
\end{algorithm}

In \cite{Sca18}, Algorithm \ref{alg:steps_ref} was used with both $\Ac_1$ and $\Ac_2$ being non-adaptive group testing algorithms, leading to a three-stage algorithm (steps 2a and 2b can be done in a single stage).  It was shown (see Theorem 3 and Footnote 2 in \cite{Sca18}) that if $\Ac_1$ achieves \eqref{eq:step1_req} using $n_1(\Ac_1)$ tests, then with $\Ac_2$ being a variant of the NCOMP algorithm \cite{Cha14}, the overall procedure succeeds with probability approaching one as long as
\begin{equation}
    n \ge \bigg( n_1(\Ac_1) + \frac{k \log k}{D(\rho \| 1-\rho)} \bigg)(1+\eta) \label{eq:n_init}
\end{equation}
for arbitrarily small $\eta > 0$, and suitably-chosen $\ncheck$ and $\ntil$ in the algorithm statement. In \cite{Sca18}, the ability to approach capacity at low sparsity levels was only shown when $\Ac_1$ is a computationally prohibitive exhaustive search decoder.  In contrast, we let $\Ac_1$ itself be an adaptive algorithm, which allows us to maintain computational efficiency.  This inner adaptive algorithm is shown in Algorithm \ref{alg:inner}; we proceed by describing these steps and providing their relevant analysis.

{\bf Analysis of first step.} In Step 1 of Algorithm \ref{alg:inner}, we partition the items into $B > 0$ equal-sized bins uniformly at random.\footnote{We ignore rounding issues, which have no impact on the result.}  Conditioned on a particular item being in a particular bin, we see that the probability of another particular item being in the same bin is at most $\frac{1}{B}$.  By the union bound, the probability of a particular defective item colliding with any of the other $k-1$ defectives is at most $\frac{k}{B}$, which behaves as $k^{-\epsilon} \to 0$ if $B = k^{1+\epsilon}$ for some $\epsilon > 0$.  Hence, the number of defectives $N_{\rm col}$ that are part of a collision satisfies $\EE[N_{\rm col}] \le k \cdot \big( \frac{k}{B} \big) = k^{1-\epsilon}$, and hence
\begin{equation}
    \PP\bigg[N_{\rm col} \ge \frac{\alpha_1}{3} k \bigg] \le \frac{3 k^{-\epsilon}}{\alpha_1} \to 0 \label{eq:first_step}
\end{equation}
for arbitrarily small $\alpha_1 > 0$ (note that $k \to \infty$ under the assumptions of Theorem \ref{thm:main}).  With this result in place, we seek to identify (most of) the bins containing at least one defective item, and then identify the defectives that did {\em not} collide using $1$-sparse recovery techniques. 

{\bf Analysis of second step.} To identify which bins contain one item, we apply a non-adaptive group testing procedure on $B$ ``super-items'', where including a super-item in a test amounts to including all of the items in the corresponding bin.  Since there are at most $k$ bins containing a defective, this amounts to recovering at most $k$ defectives among $B = k^{1+\epsilon}$ items.  In Appendix \ref{app:NCOMP_approx}, we explain how the analysis of the variant of NCOMP used in \cite{Sca18} can be adapted to ensure approximate recovery with $O\big( k\log\frac{B}{k} \big)$ tests (as opposed to $O(k \log B)$ for exact recovery).  Specifically, using $O\big( k\log\frac{B}{k} \big)$ tests with a sufficiently large implied constant, we can construct a set $\Bc$ of bin indices such that
\begin{equation}
    \max\{|\Bc \backslash \Bc^*|, |\Bc^* \backslash \Bc|\} \le \frac{\alpha_1}{3} k \label{eq:bin_guarantee}
\end{equation}
with probability approaching one, where $\Bc^*$ is the set of true defective bins. Here $\alpha_1 > 0$ is the same as in \eqref{eq:first_step}.

Note that since $B = k^{1 + \epsilon}$, the number of tests $O\big( k\log\frac{B}{k} \big)$ used in this step simplifies to
\begin{equation}
    n_{{\rm inner},2} = O( \epsilon k \log k ). \label{eq:n_inner2}
\end{equation}
As we will see, the contribution to the overall number of tests can be made negligible by taking $\epsilon$ to be small.


{\bf Analysis of the third step.} In Step 3 of Algorithm \ref{alg:inner}, for every bin that returned positive, we apply $1$-sparse group testing using non-adaptive methods.  If the bin truly does contain a single defective item, then identifying it is a relatively easy problem, because in the $1$-sparse setting the observation vector is simply equal to the column of the test matrix indexed by the unique defective item, plus the noise term.  Therefore, by letting the columns of the non-adaptive testing matrix be the codewords of a channel code, we precisely recover a channel coding problem.  We could use capacity-achieving expander codes as in \cite{Cai13}, but in contrast to \cite{Cai13} it in fact suffices for our purposes to use {\em any} capacity-achieving code.

Since there are $p' = \frac{p}{B}$ columns (which ensures $p' \to \infty$ under the choice $B = k^{1+\epsilon}$ and scaling $k = \Theta(p^{\theta})$, as long as $\epsilon$ is small enough), and the capacity is $\log 2 - H_2(\rho)$, we deduce that any given $1$-sparse sub-routine succeeds with probability approaching one provided that
\begin{equation}
    n' \ge (1+\eta) \frac{\log \frac{p}{B}}{ \log 2 - H_2(\rho) } \label{eq:n'}
\end{equation}
for arbitrarily small $\eta > 0$, and also provided that the bin truly does contain a single defective item.

We proceed by adopting a pessimistic analysis: If we run $1$-sparse recovery on a bin with no defective items, we assume that this results in some item erroneously being marked as defective.  In addition, if we run $1$-sparse recovery on a bin with multiple defectives, we assume that this results in all of these defectives being marked as non-defective, {\em and} an additional non-defective erroneously being marked as defective.

In light of this pessimistic view, we analyze the number of {\em additional} mistakes caused upon performing $1$-sparse recovery on the bins that do contain a single defective.  There are at most $k$ such bins, and since we have not specified the convergence rate of the error probability, a union bound may make the overall error probability bound large.  Instead, letting $\delta_p \to 0$ denote the individual error probability, we note that the average number of errors is at most $\delta_p k$.  Hence, by Markov's inequality, the probability of the number of errors exceeding $\frac{\alpha_1}{3} k$ is at most $\frac{3 \delta_p}{\alpha_1}$, which is vanishing since $\delta_p \to 0$.

{\bf Wrapping up: Total number of mistakes.} Combining the above, the contributions of errors in the final estimate of the defective set produced by Algorithm \ref{alg:inner} behaves as follows with probability approaching one:
\begin{itemize}
    \item In Step 1, the collisions cause at most $\frac{\alpha_1}{3}$ false negatives for the collided items, and at most $\frac{\alpha_1}{6}$ false positives from applying $1$-sparse recovery to a bin with multiple defectives ({\em cf.}, \eqref{eq:first_step}).  Note that we replace $3$ by $6$ in the latter expression because each bin with collisions contains two or more defectives.
    \item In Step 2, the missed bins $\Bc^* \setminus \Bc$ cause at most $\frac{\alpha_1}{3}$ false negatives, and the false positive bins $\Bc \setminus \Bc^*$ cause at most  $\frac{\alpha_1}{3}$ false positives ({\em cf.}, \eqref{eq:bin_guarantee}).
    \item In Step 3, we have at most $\frac{\alpha_1 k}{3}$ decoding errors, meaning the number of errors of each type is at most $\frac{\alpha_1 k}{3}$.
\end{itemize}
Combining these bounds, we find that the required condition \eqref{eq:step1_req} in Algorithm \ref{alg:steps_ref} holds with probability approaching one.

{\bf Wrapping up: Total number of tests.} We count the number of tests used above to obtain an expression for $n_1(\Ac_1)$ in \eqref{eq:n_init}.  Adding $n_{{\rm inner},2}$ in \eqref{eq:n_inner2} with $k$ times the expression for $n'$ in \eqref{eq:n'}, substituting $B = k^{1 + \epsilon}$, and noting that $\epsilon$ may be arbitrarily small, we readily deduce the overall number of tests stated in Theorem \ref{thm:main}.

    Our algorithm has four rounds of adaptivity overall, since Algorithm \ref{alg:steps_ref} is a ``three-stage'' algorithm, but its first stage is replaced by Algorithm \ref{alg:inner}, which performs tests in two stages.

\subsection{Discussion: Decoding Time and Limitations} \label{sec:runtime}

The runtimes of the various sub-routines used in the decoding procedure are outlined as follows:
\begin{itemize}
    \item Steps 2b and 3 of Algorithm \ref{alg:steps_ref} use a trivial decoding procedure, so their total runtime matches the corresponding number of tests, $O(k \log k)$ \cite{Sca18}.
    \item In Step 2 of Algorithm \ref{alg:inner}, running NCOMP on $k^{1+\epsilon}$ items with $O(\epsilon k \log k)$ tests incurs a runtime of $O( \epsilon k^{2+\epsilon} \log k )$.
    \item Letting $\tau_{\Cc}$ be the decoding time of the code $\Cc$, Step 3 of Algorithm \ref{alg:inner} incurs decoding time $O(k^{1+\epsilon} \tau_{\Cc})$.  For example, if $\Cc$ can be decoded in time linear in the block length, then $\tau_{\Cc} = O(\log \frac{p}{k})$.
    \item For $\Ac_2$ in Step 2a of Algorithm \ref{alg:steps_ref}, using NCOMP as proposed in \cite{Sca18} incurs decoding time $O( \alpha_1 k p \log p )$, which may be much higher than the previous dot points since $k \ll p$.  However, by replacing NCOMP by the recently-proposed bit-mixing coding (BMC) method \cite[Sec.~IV]{Bon19a}, this can be reduced to $O( (\alpha_1 k)^2 \cdot \log(\alpha_1 k) \cdot \log p )$, so that the overall decoding time is polynomial in $k \log p$ and does not incur any linear dependence on $p$.
\end{itemize}


A limitation of our analysis is that the convergence of $\pe$ to zero may be very slow.  For instance, our error probability bound contains terms of the form $\frac{k^{-\epsilon}}{\alpha_1}$, where we take {\em both} $\epsilon$ and $\alpha_1$ to be arbitrarily small at the end of the proof.  Comparing noisy adaptive group testing strategies in {\em finite-size} systems remains an interesting direction for future work.

\appendices

\section{Approximate Recovery for NCOMP} \label{app:NCOMP_approx}

A slight variant of NCOMP \cite{Cha11} was proposed in \cite{Sca18} for achieving $n = O(k \log p)$ scaling for recovering $k$ defectives out of $p$ items when $k$ is only known up to a constant factor, i.e., $k \in [c_0 \kmax,\kmax]$ for some $\kmax = \Theta(p^{\theta})$.\footnote{In addition, as noted in \cite{Sca18}, if only an upper bound $k \le \kmax'$ is known, one can move to the regime that it is known to within a factor of two by adding an extra $\kmax'$ ``dummy'' defective items.  This is merely a trick to simplify the theoretical analysis, and is unlikely to be useful in practice.}  Here we outline how this extension of \cite{Sca18} can be modified to achieve $n = O\big(k \log \frac{p}{k}\big)$ scaling for {\em approximate recovery} with $k = o(p)$.  Specifically, we seek to produce an estimate $\hat{S}$ such that $\max\{|\Shat \backslash S|, |S \backslash \Shat|\} \le \alpha k$ with probability approaching one, for arbitrarily small $\alpha > 0$.

Under Bernoulli testing, in which each item is placed in each test independently with probability $\frac{\nu}{\kmax}$ for some $\nu > 0$, the analysis in \cite[Appendix C]{Sca18} proceeds as follows:
\begin{itemize}
    \item Letting $N'_j$ denote the number of tests in which item $j$ is included, show that
    \begin{equation}
        \PP\bigg[ N'_j \le \frac{n\nu}{2\kmax} \bigg] \le e^{-\Theta(1) \frac{n}{\kmax}}. \label{eq:ncomp1}
    \end{equation}
    \item Letting $N'_{j,1}$ be the number of the $N'_j$ tests including $j$ that returned positive, show that conditioned on $N'_j = n'_j$ for some $n'_j > \frac{n\nu}{\kmax}$, it holds for any defective $j$ that
    \begin{equation}
        \PP\big[ N'_{j,1} < (1 - \rho - \Delta)n'_j \big] \le e^{-\Theta(1) \frac{n\nu}{2\kmax}}, \label{eq:ncomp2}
    \end{equation}
    and for any non-defective $j$ that
    \begin{equation}
        \PP\big[ N'_{j,1} \ge (1 - \rho - \Delta)n'_j \big] \le e^{-\Theta(1) \frac{n\nu}{2\kmax}} \label{eq:ncomp3}
    \end{equation}
    when $\Delta > 0$ is sufficiently small (but constant).
\end{itemize}
In light of these bounds,  NCOMP simply declares the items $j$ satisfying $N'_{j,1} \ge (1 - \rho - \Delta)N'_j$ to be defective.  A simple union bound over $p$ items in \eqref{eq:ncomp1}, $k$ items in \eqref{eq:ncomp2}, and $p-k$ items in \eqref{eq:ncomp3} yields vanishing error probability when $n = \Theta(k \log p)$ with a sufficiently large implied constant.  

The modification for approximate recovery is straightforward: The average number of mistakes (false positives or false negatives) is upper bounded by the sum of $p$ times \eqref{eq:ncomp1}, $k$ times \eqref{eq:ncomp2}, and $p-k$ times \eqref{eq:ncomp3}.  In particular, if $k = o(p)$ and $n = \Theta(k \log \frac{p}{k})$ with a sufficiently large implied constant, then the average number of mistakes behaves as $o(k)$; then, by Markov's inequality, the probability of the number of mistakes exceeding $\alpha k$ tends to zero for any fixed $\alpha > 0$.

\paragraph*{Acknowledgment}
This work was supported by an NUS Early Career Research Award. 

\vspace*{-0.5ex}
\bibliographystyle{IEEEtran}
\bibliography{../JS_References}

\begin{thebibliography}{10}
\providecommand{\url}[1]{#1}
\csname url@samestyle\endcsname
\providecommand{\newblock}{\relax}
\providecommand{\bibinfo}[2]{#2}
\providecommand{\BIBentrySTDinterwordspacing}{\spaceskip=0pt\relax}
\providecommand{\BIBentryALTinterwordstretchfactor}{4}
\providecommand{\BIBentryALTinterwordspacing}{\spaceskip=\fontdimen2\font plus
\BIBentryALTinterwordstretchfactor\fontdimen3\font minus
  \fontdimen4\font\relax}
\providecommand{\BIBforeignlanguage}[2]{{%
\expandafter\ifx\csname l@#1\endcsname\relax
\typeout{** WARNING: IEEEtran.bst: No hyphenation pattern has been}%
\typeout{** loaded for the language `#1'. Using the pattern for}%
\typeout{** the default language instead.}%
\else
\language=\csname l@#1\endcsname
\fi
#2}}
\providecommand{\BIBdecl}{\relax}
\BIBdecl

\bibitem{Dor43}
R.~Dorfman, ``The detection of defective members of large populations,''
  \emph{Ann. Math. Stats.}, vol.~14, no.~4, pp. 436--440, 1943.

\bibitem{Ant11}
A.~Fern\'andez~Anta, M.~A. Mosteiro, and J.~Ram\'on Mu\~{n}oz, ``Unbounded
  contention resolution in multiple-access channels,'' in \emph{Distributed
  Computing}.\hskip 1em plus 0.5em minus 0.4em\relax Springer Berlin
  Heidelberg, 2011, vol. 6950, pp. 225--236.

\bibitem{Cli10}
R.~Clifford, K.~Efremenko, E.~Porat, and A.~Rothschild, ``Pattern matching with
  don't cares and few errors,'' \emph{J. Comp. Sys. Sci.}, vol.~76, no.~2, pp.
  115--124, 2010.

\bibitem{Cor05}
G.~Cormode and S.~Muthukrishnan, ``What's hot and what's not: Tracking most
  frequent items dynamically,'' \emph{ACM Trans. Database Sys.}, vol.~30,
  no.~1, pp. 249--278, March 2005.

\bibitem{Gil08}
A.~Gilbert, M.~Iwen, and M.~Strauss, ``Group testing and sparse signal
  recovery,'' in \emph{Asilomar Conf. Sig., Sys. and Comp.}, Oct. 2008, pp.
  1059--1063.

\bibitem{Gil07}
A.~C. Gilbert, M.~J. Strauss, J.~A. Tropp, and R.~Vershynin, ``One sketch for
  all: Fast algorithms for compressed sensing,'' in \emph{Proc. ACM-SIAM Symp.
  Disc. Alg. (SODA)}, New York, 2007, pp. 237--246.

\bibitem{Hwa72}
F.~Hwang, ``A method for detecting all defective members in a population by
  group testing,'' \emph{J. Amer. Stats. Assoc.}, vol.~67, no. 339, pp.
  605--608, 1972.

\bibitem{Sca18}
J.~Scarlett, ``Noisy adaptive group testing: Bounds and algorithms,'' 2019,
  accepted to {\em IEEE Trans. Inf. Theory}.

\bibitem{Mal78}
M.~Malyutov, ``\BIBforeignlanguage{English}{The separating property of random
  matrices},'' \emph{\BIBforeignlanguage{English}{Math. Notes Acad. Sci.
  {USSR}}}, vol.~23, no.~1, pp. 84--91, 1978.

\bibitem{Mal80}
M.~B. Malyutov and P.~S. Mateev, ``Screening designs for non-symmetric response
  function,'' \emph{Mat. Zametki}, vol.~29, pp. 109--127, 1980.

\bibitem{Dya81}
A.~G. D'yachkov, ``Error probability bounds for the symmetrical model of the
  design of screening experiments,'' \emph{Prob. Inf. Transm.}, 1982.

\bibitem{Ati12}
G.~Atia and V.~Saligrama, ``Boolean compressed sensing and noisy group
  testing,'' \emph{IEEE Trans. Inf. Theory}, vol.~58, no.~3, pp. 1880--1901,
  March 2012.

\bibitem{Ald14}
M.~Aldridge, L.~Baldassini, and K.~Gunderson, ``Almost separable matrices,''
  \emph{J. Comb. Opt.}, pp. 1--22, 2015.

\bibitem{Sca15}
J.~Scarlett and V.~Cevher, ``Limits on support recovery with probabilistic
  models: An information-theoretic framework,'' \emph{IEEE Trans. Inf. Theory},
  vol.~63, no.~1, pp. 593--620, 2017.

\bibitem{Sca15b}
------, ``Phase transitions in group testing,'' in \emph{Proc. ACM-SIAM Symp.
  Disc. Alg. (SODA)}, 2016.

\bibitem{Sca16b}
------, ``Converse bounds for noisy group testing with arbitrary measurement
  matrices,'' in \emph{IEEE Int. Symp. Inf. Theory}, Barcelona, 2016.

\bibitem{Ald15}
M.~Aldridge, ``The capacity of {B}ernoulli nonadaptive group testing,''
  \emph{IEEE Trans. Inf. Theory}, vol.~63, no.~11, pp. 7142--7148, 2017.

\bibitem{Bal13}
L.~Baldassini, O.~Johnson, and M.~Aldridge, ``The capacity of adaptive group
  testing,'' in \emph{IEEE Int. Symp. Inf. Theory}, July 2013, pp. 2676--2680.

\bibitem{Cha11}
C.~L. Chan, P.~H. Che, S.~Jaggi, and V.~Saligrama, ``Non-adaptive probabilistic
  group testing with noisy measurements: Near-optimal bounds with efficient
  algorithms,'' in \emph{Allerton Conf. Comm., Ctrl., Comp.}, Sep. 2011, pp.
  1832--1839.

\bibitem{Cha14}
C.~L. Chan, S.~Jaggi, V.~Saligrama, and S.~Agnihotri, ``Non-adaptive group
  testing: Explicit bounds and novel algorithms,'' \emph{IEEE Trans. Inf.
  Theory}, vol.~60, no.~5, pp. 3019--3035, May 2014.

\bibitem{Sca17b}
J.~Scarlett and V.~Cevher, ``Near-optimal noisy group testing via separate
  decoding of items,'' \emph{IEEE Trans. Sel. Topics Sig. Proc.}, vol.~2,
  no.~4, pp. 625--638, 2018.

\bibitem{Sca18b}
J.~Scarlett and O.~Johnson, ``Noisy non-adaptive group testing: A
  {(near-)}definite defectives approach,'' 2018,
  https://arxiv.org/abs/1808.09143.

\bibitem{Dam12}
P.~Damaschke and A.~S. Muhammad, ``Randomized group testing both query-optimal
  and minimal adaptive,'' in \emph{Int. Conf. Current Trends in Theory and
  Practice of Computer Science}.\hskip 1em plus 0.5em minus 0.4em\relax
  Springer, 2012, pp. 214--225.

\bibitem{Du93}
D.~Du and F.~K. Hwang, \emph{Combinatorial group testing and its
  applications}.\hskip 1em plus 0.5em minus 0.4em\relax World Scientific, 2000,
  vol.~12.

\bibitem{Mac98}
A.~J. Macula, ``Probabilistic nonadaptive and two-stage group testing with
  relatively small pools and {DNA} library screening,'' \emph{J. Comb. Opt.},
  vol.~2, no.~4, pp. 385--397, 1998.

\bibitem{Mez11}
M.~M{\'e}zard and C.~Toninelli, ``Group testing with random pools: Optimal
  two-stage algorithms,'' \emph{IEEE Trans. Inf. Theory}, vol.~57, no.~3, pp.
  1736--1745, 2011.

\bibitem{Epp07}
D.~Eppstein, M.~T. Goodrich, and D.~S. Hirschberg, ``Improved combinatorial
  group testing algorithms for real-world problem sizes,'' \emph{SIAM Journal
  on Computing}, vol.~36, no.~5, pp. 1360--1375, 2007.

\bibitem{Cai13}
S.~Cai, M.~Jahangoshahi, M.~Bakshi, and S.~Jaggi, ``Efficient algorithms for
  noisy group testing,'' \emph{IEEE Trans. Inf. Theory}, vol.~63, no.~4, pp.
  2113--2136, 2017.

\bibitem{Bon19a}
S.~Bondorf, B.~Chen, J.~Scarlett, H.~Yu, and Y.~Zhao, ``Sublinear-time
  non-adaptive group testing with {$O(k \log n)$} tests via bit-mixing
  coding,'' 2019, https://arxiv.org/abs/1904.10102.

\end{thebibliography}

\end{document}